\def\ps@headings{%
	\def\@oddhead{\mbox{}\scriptsize\rightmark \hfil \thepage}%
	\def\@evenhead{\scriptsize\thepage \hfil \leftmark\mbox{}}%
	\def\@oddfoot{}%
	\def\@evenfoot{}}
\makeatother \pagestyle{empty}
\newtheorem{theorem}{Theorem}
\def\BibTeX{{\rm B\kern-.05em{\sc i\kern-.025em b}\kern-.08em
		T\kern-.1667em\lower.7ex\hbox{E}\kern-.125emX}}
\begin{document}
	
	\begin{textblock}{18}(2,0.5)
	\centering
	\noindent R. Li, X. Shao, S. Sun, M. Tao, and R. Zhang, ``Beam scanning for integrated sensing and communication in IRS-aided mmWave systems," in \textit{Proc. IEEE SPAWC}, Sep. 2023.
\end{textblock}
	
	\title{Beam Scanning for Integrated Sensing and Communication in IRS-aided mmWave Systems}

	\author{\authorblockN{Renwang Li$^{1,2}$, Xiaodan Shao$^3$, Shu Sun$^{1,2}$, Meixia Tao$^{1,2}$, Rui Zhang$^{3,4}$}\\
		\authorblockA{$^1$Department of Electronic Engineering, Shanghai Jiao Tong University, Shanghai, China		} \\		
		\authorblockA{$^2$Shanghai Key Laboratory of Digital Media Processing and Transmission, Shanghai Jiao Tong University, Shanghai, China		} \\
		\authorblockA{$^3$School of Science and Engineering, Shenzhen Research Institute of Big Data,
			The Chinese University of Hong Kong, Shenzhen, China} \\
		\authorblockA{$^4$Department of Electrical and Computer Engineering, National University of Singapore, Singapore} \\
		\authorblockA{Email: renwanglee@sjtu.edu.cn, shaoxiaodan@zju.edu.cn, shusun@sjtu.edu.cn, mxtao@sjtu.edu.cn, elezhang@nus.edu.sg}
		\thanks{This work is supported by the NSF of China under Grant 62125108 and Grant 62271310, by the Fundamental Research Funds for the Central Universities in China, and by the 2022 Stable Research Program of
			Higher Education of China under Grant 20220817144726001.}
	}

	\maketitle
	\thispagestyle{empty}
	
	\begin{abstract}
		This paper investigates an intelligent reflecting surface (IRS) aided millimeter-wave integrated sensing and communication (ISAC) system. Specifically, based on the passive beam scanning in the downlink, the IRS finds the optimal beam for reflecting the signals from the base station to a communication user. Meanwhile, the IRS estimates the angle of a nearby target based on its echo signal received by the sensing elements mounted on the IRS (i.e., semi-passive IRS). We propose an ISAC protocol  for achieving the above objective via simultaneous (beam) training and sensing (STAS). Then, we derive the achievable rate of the communication user and the Cramer-Rao bound (CRB) of the angle estimation for the sensing target in closed-form. {The achievable rate and CRB exhibit different performance against the duration of beam scanning. Specifically, the average achievable rate  initially rises and subsequently declines, while the CRB monotonically  decreases. Consequently, the duration of beam scanning should be carefully selected to balance communication and sensing performance.}  Simulation results have verified our analytical findings and shown that, thanks to the efficient use of downlink beam scanning signal for simultaneous communication and target sensing, the STAS protocol outperforms the benchmark protocol with orthogonal beam training and sensing.
	\end{abstract}
	\begin{IEEEkeywords}
		Integrated sensing and communication, intelligent reflecting surface, beam scanning, mmWave
	\end{IEEEkeywords}
	
	\section{Introduction}
	Recently, integrated sensing and communication (ISAC) has been recognized as a key technology for the future sixth-generation (6G) wireless network due to its potential to enable sharing of spectrum and hardware resources between communication and sensing systems \cite{ 9737357}. Meanwhile,   millimeter-wave (mmWave) technology can provide high data rate for communication  as well as high resolution for sensing. Accordingly, mmWave is promising for realizing ISAC systems. However, mmWave is susceptible to   obstacles, and the performance of  mmWave ISAC systems will degrade dramatically in the absence of line-of-sight (LoS) path. To address this issue, intelligent reflecting surfaces (IRSs)  can be a practically viable solution \cite{8910627, 9326394}. An IRS is generally a digitally-controlled metasurface composed of a large number {of} passive reflecting elements (REs) that can independently impose phase shifts on the incident signal. By leveraging the IRS, a virtual LoS link can be created between two wireless nodes when their direct link is obstructed, allowing for uninterrupted  sensing and communication. 
	
	Motivated by the above appealing advantages, IRS-aided ISAC systems have been widely studied in various scenarios \cite{9364358, 9771801, 9416177, 10038557,9593143}.  The work \cite{9364358} studies the joint design of transmit beamforming at the base station (BS) and reflection coefficients at the IRS for maximizing  the signal-to-noise ratio (SNR) of radar detection and meeting the communication requirement at the same time. The works \cite{9771801} and \cite{9416177} consider a radar beampattern design problem  for single-user and multi-user  scenarios, respectively.  The authors in \cite{10038557} propose a feedback-based beam training approach to design the transmit beamforming and IRS reflection coefficients for conducting communication and sensing simultaneously in the data transmission period. The authors in \cite{9593143} propose a multi-stage hierarchical beam training codebook to achieve the desired localization accuracy of the target while ensuring a reliable communication link with the user. Notice that, all of the aforementioned works adopt passive IRS to assist sensing, and thus their performance  is hindered by the severe path loss of the BS-IRS-target-IRS-BS cascaded echo link, particularly in mmWave frequencies. Furthermore, most existing works focus on ISAC during the data transmission phase and do not consider the beam training phase, which is practically required for mmWave communication systems.
	\begin{figure}[t]
		\begin{centering}
			\includegraphics[width=.4\textwidth]{./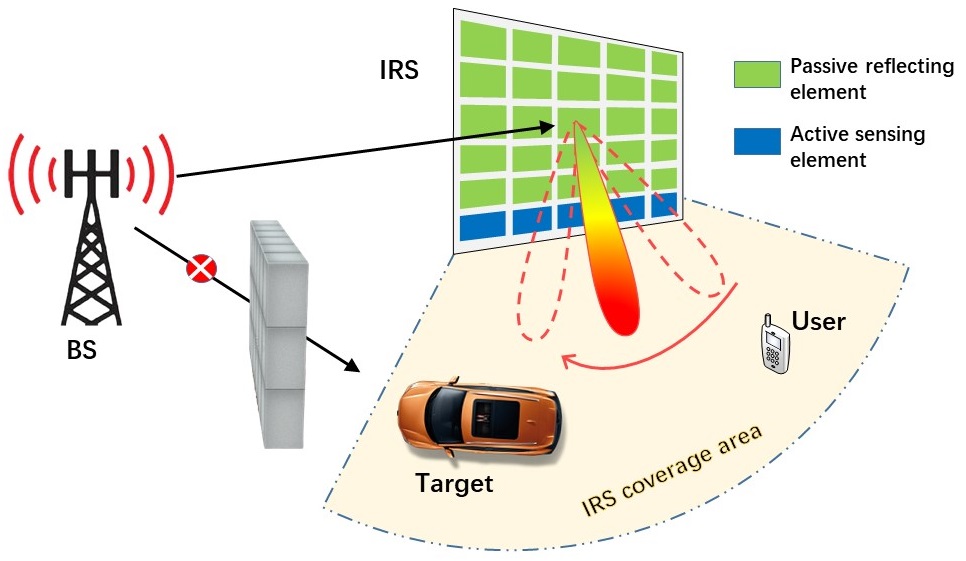}
			\caption{Semi-passive IRS aided ISAC.}\label{fig_system}
		\end{centering}
		\vspace{-0.8cm}
	\end{figure}	
	
	As such, in this paper, we consider an IRS-aided mmWave ISAC system, as illustrated in Fig. \ref{fig_system}, where a semi-passive  IRS consisting of passive REs and active sensing elements (SEs) is adopted to create  virtual LoS links for a communication user and a sensing target to maximize the data transmission rate and  the angle estimation accuracy, respectively. Note that the active SEs are used to collect the echoes reflected from the target for its angle estimation. Compared with a fully passive IRS, the semi-passive IRS can significantly reduce the path loss of the received echo signal by avoiding the IRS-BS link \cite{9724202}.
	To cater to practical mmWave systems, we propose a simultaneous
	(beam) training and sensing (STAS) protocol by exploiting the downlink IRS beam training/scanning for the communication user to estimate the target's angle concurrently.
	In addition, we analyze the achievable rate of the communication user and derive the Cramer-Rao bound (CRB) of the target angle estimation. Simulation results show that the achievable rate and CRB both depend on the time of beam scanning, but behaving differently.  {Specifically, the average achievable rate initially rises and subsequently declines, while the CRB consistently decreases.} Furthermore, we demonstrate that the STAS protocol outperforms the benchmark protocol with orthogonal beam training and sensing, due to its effective utilization of the downlink scanning signal for simultaneous sensing. 
	
	\section{System Model and Proposed Protocol}
	\subsection{System Model}
	We consider an ISAC system with the aid of a semi-passive IRS as illustrated in Fig. \ref{fig_system}, where an $N$-antenna BS attempts to communicate with a single-antenna user, and also to detect the angle of a sensing target from the IRS. The direct links between the BS and the user, as well as the target, are assumed to be blocked due to unfavorable propagation environment. Thus, the IRS is deployed to create virtual links for both communication and sensing. We consider the use of a semi-passive IRS consisting of $M$ passive REs to reflect the transmitted signals from the BS to the user and target, and $M_s$ active SEs to collect the echoes from the target for its angle estimation. The complex-valued baseband transmitted signal at the BS can be expressed as $\mathbf{x}=\mathbf{w}s$, where  $s$ denotes the training/data symbol for the communication user with unit power and $\mathbf{w} \in \mathbb{C}^{N\times 1}$ is the transmit beamforming vector with $\|\mathbf{w}\|^2=1$. Then, the received signal $y_u$ at the user can be expressed as
	\begin{equation} \label{exp_user}
		y_u = \sqrt{P_t}\mathbf{h}_u^H \operatorname{diag} (\boldsymbol{\phi}) \mathbf{G} \mathbf{w} s + n_u, \\
	\end{equation}
	where $P_t$ is the transmit power at the BS, $\mathbf{G}\in \mathbb{C}^{M\times N}$ represents the channel between the BS and IRS REs, $\mathbf{h}_u\in \mathbb{C}^{M\times 1}$ represents the channel between the IRS REs and the user, $n_u \sim \mathcal{CN}(0, \sigma^2)$ is the receiver AWGN with $\sigma^2$ representing the noise power, and $\boldsymbol{\phi}\in \mathbb{C}^{M\times 1}$ represents the reflection vector at the IRS REs, which can be written as
	\begin{equation} 
		\boldsymbol{\phi}=\left[e^{j\phi_1},e^{j\phi_2},\ldots, e^{j\phi_M}  \right],
	\end{equation}
	with $\phi_i$ being the phase shift by the $i$-th RE.
	
	 {We adopt the LoS channel model to characterize the mmWave channel. For ease of exposition, we assume that uniform linear arrays (ULAs)  are equipped at the BS, IRS REs, and IRS SEs.} Thus, the BS-IRS REs channel can be expressed as
	\begin{equation} 
		\mathbf{G} = \alpha_g \mathbf{a}_r (\theta_{BI}) \mathbf{a}_b^H (\vartheta_{BI}),
	\end{equation}
	where $\alpha_g=\frac{\lambda}{4\pi d_{BI}} e^{\frac{j 2\pi d_{BI}}{\lambda}}$ denotes the complex-valued path gain of the BS-IRS REs channel with $\lambda$ being the carrier wavelength and $d_{BI}$ being the distance between the BS and IRS,  $\vartheta_{BI}$ denotes the angle of departure (AoD) from the BS, $\theta_{BI}$ denotes the angle of arrival (AoA) to the IRS, and $\mathbf{a}_r(\cdot)$ $\left(\mathbf{a}_b(\cdot)\right)$ denotes the array response vector associated with the IRS (BS). The array response vector for a ULA with $M$ elements of half-wavelength spacing, and the center of the ULA  as the reference point, can be expressed as:
	\begin{equation} \label{exp_array_resp}
		{\mathbf{a}}(\theta) = \left[ 
		e^{-j \frac{(M-1)\pi \sin (\theta)}{2}} \enspace e^{-j \frac{(M-3)\pi \sin (\theta)}{2}}  \enspace \ldots \enspace e^{j \frac{(M-1)\pi \sin (\theta)}{2}}
		\right]^{T}.
	\end{equation}
	The IRS-user channel $\mathbf{h}_u$ can also be written as
	\begin{equation} 
		\mathbf{h}_u = \alpha_h \mathbf{a}_r (\theta_{IU} ),
	\end{equation}
	where $\alpha_h=\frac{\lambda}{4\pi d_{IU}} e^{\frac{j 2\pi d_{IU}}{\lambda}}$ denotes the complex-valued path gain of the IRS-user channel with $d_{IU}$ being the distance between the IRS and user, and  $\theta_{IU}$ denotes the AoD associated with the IRS.	
	 {Considering that the locations of the BS and IRS are fixed, the BS-IRS channel $\mathbf{G}$ is assumed to be constant for a long period, which can be estimated beforehand at the BS to achieve the optimal transmit beamforming as $\mathbf{w}=\frac{1}{\sqrt{N}} \mathbf{a}_b (\vartheta_{BI})$. Thus, in this paper we focus on the  beam training at the IRS.} And the received signal at the communication user  can be rewritten as
	\begin{equation} \label{exp_user2}
		y_u = \sqrt{N P_t} \alpha_g \mathbf{h}_u^H \operatorname{diag} (\boldsymbol{\phi}) \mathbf{a}_r (\theta_{BI}) s + n_u. \\
	\end{equation}
	
	The IRS SEs can simultaneously receive the signals transmitted  from  the BS and the echoes reflected by the target\footnote{The radar cross section (RCS) of the user is usually significantly smaller compared to the target. Hence, the echo signal reflected by the user can be safely ignored in the target angle estimation.}. In general, the angles of the target and BS with respect to the IRS are different and can be estimated by the IRS SEs based on the received echoes. The angle between the BS and IRS can be determined in advance by the IRS SEs, which facilitates in estimating the angle of the target. For simplicity, we only consider the echo signal reflected by the target. Hence, the received signal $\mathbf{y}_s \in \mathbb{C}^{M_s\times 1}$ at the IRS SEs is given by
		\begin{align} \label{exp_sensors}
			\mathbf{y}_s &= \sqrt{P_t}\mathbf{H}_t \operatorname{diag} (\boldsymbol{\phi}) \mathbf{G} \mathbf{w} s   + \mathbf{n}_s, \notag \\
			&= \sqrt{N P_t } \alpha_g \mathbf{H}_t \operatorname{diag} (\boldsymbol{\phi}) \mathbf{a}_r (\theta_{BI}) s  + \mathbf{n}_s,
		\end{align}
	where $\mathbf{H}_t \in \mathbb{C}^{M_s\times M}$ denotes the channel matrix on the IRS REs-target-IRS SEs link, and $\mathbf{n}_s \sim \mathcal{CN}(0, \sigma^2 \mathbf{I}_{M_s})$ is the receiver AWGN.
	The channel $\mathbf{H}_t$ can be represented as
	\begin{equation} 
		\mathbf{H}_t = \alpha_s \mathbf{a}_s ( \theta_{IT}) \mathbf{a}_r^H (\theta_{IT} ),
	\end{equation}
	where  $\theta_{IT}$ denotes the AoD from the IRS SEs, $\mathbf{a}_s(\cdot)$  represents the array response vector associated with the IRS  SEs,  $\alpha_s=\sqrt{\frac{\lambda^2 \kappa}{64\pi^3 d_{IT}^4}} e^{\frac{j 4\pi d_{IT}}{\lambda}}$ refers to the complex path gain of the IRS REs-target-IRS SEs link \cite{9367457},  in which $d_{IT}$ denotes the  distance between the IRS and target and $\kappa$ is the RCS of the target.
	
	\subsection{Proposed Protocol for ISAC}
	\begin{figure}[t]
		\begin{centering}
			\includegraphics[width=.47\textwidth]{./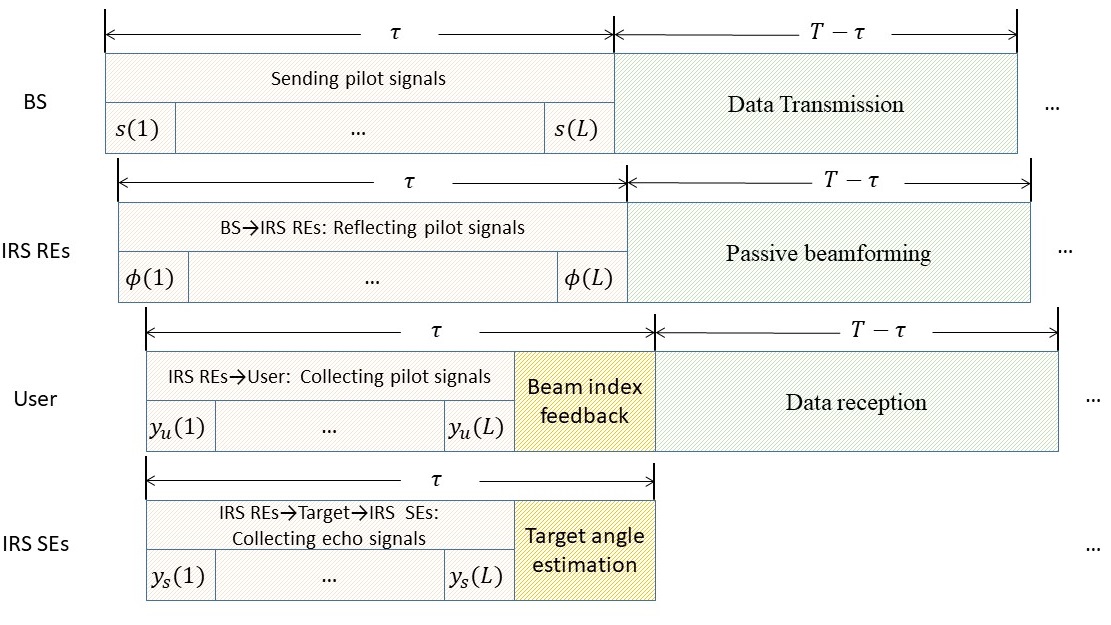}
			\caption{IRS-aided mmWave STAS protocol.}\label{fig_protocol}
		\end{centering}
		\vspace{0cm}
	\end{figure}
	In this subsection, we propose a practical protocol for the considered IRS-aided mmWave ISAC system. We adopt the widely-used discrete Fourier transformation (DFT) codebook $\mathbf{D}$ with $L$ beams as  follows
	\begin{equation} \label{exp_codebook}
		\mathbf{D} \triangleq[\mathbf{a}_r (\eta(1)), \mathbf{a}_r(\eta(2)), \cdots, \mathbf{a}_r(\eta(L))] \in \mathbb{C}^{M \times L},
	\end{equation}
	where $\eta(i)=\arcsin\left(-1+\frac{2 i-1}{L}\right), i=1, \cdots, L$, and $L \geq M$. Following the existing mmWave communication protocol, beam training/scanning needs to be first conducted at the IRS, followed by data transmission. Nevertheless, in our system model, the IRS SEs can exploit downlink beam scanning for sensing by collecting the echoes reflected from the target. We thus propose the STAS protocol, as shown in Fig. \ref{fig_protocol}, for enabling both downlink beam training and sensing simultaneously.  {The STAS protocol  is divided into two phases within the channel coherence time $T$: beam scanning phase with time duration of $\tau$, and data transmission phase with time duration of $T-\tau$.}
	\begin{itemize}
		\item Phase I (beam scanning): The BS sends downlink training signals. The IRS REs sweep the beams in a given codebook $\mathbf{D}$, while the IRS SEs collect the echo signal reflected from the target.
		At the end of IRS beam scanning, the communication user finds the IRS's best beam and feeds its index back to the IRS controller (directly or via the BS). In the meanwhile, the IRS SEs estimate the target angle based on the received echo signal.
		\item Phase II (data transmission): The BS sends downlink data signals to the communication user, and the IRS REs adopt  the best beam based on its feedback in Phase I, where the criterion for the best beam will be detailed in the next section.
	\end{itemize}		
	
	\section{Performance Analysis}

	\subsection{Achievable Communication Rate}
	 {In this subsection, we first analyze the best channel gain during the beam training phase, and then derive the achievable rate of the communication user.} In the period of beam training, the BS's signal can be set as $s =1$. The received signal at the user can be expressed as
		\begin{align}
			y_u(t) &= \sqrt{N P_t} \alpha_g \alpha_h \mathbf{a}_r^H (\theta_{IU}) \operatorname{diag} (\boldsymbol{\phi}(t)) \mathbf{a}_r (\theta_{BI})  + n_u(t)  \notag\\
			& = \sqrt{N P_t} \alpha_g \alpha_h \boldsymbol{\phi}^T(t) \operatorname{diag} (\mathbf{a}_r^H (\theta_{IU}) ) \mathbf{a}_r (\theta_{BI})  +n_u(t) \notag \\
			& = \sqrt{N P_t} \alpha_g \alpha_h \left(\mathbf{a}_r^H (\overline{\theta}_{IU}) \boldsymbol{\phi}(t) \right) +n_u(t),
		\end{align}
	where $\overline{\theta}_{IU}=\arcsin \left(\sin (\theta_{IU}) - \sin (\theta_{BI})\right)$, and $\boldsymbol{\phi}(t) \in \mathbf{D}$.  We assume that $\ell$ is the best beam index, then
	\begin{equation} 
	\ell = \arg \max \limits_{t, t=1, \cdots, L} |y_u(t)|^2.
	\end{equation}	
	Let $\delta= \left|\sin (\overline{\theta}_{IU}) - \sin (\eta(\ell)) \right| $ denote  the spatial direction difference between $\sin (\overline{\theta}_{IU})$ and its adjacent beam $\sin(\eta(\ell))$ $\left(0\leq \delta\leq \frac{1}{L}\right)$. Then, by denoting the best beam as $\boldsymbol{\phi}^\star=\mathbf{a}_r (\eta(\ell))$, the IRS beamforming gain can be expressed as
	\begin{equation} 
		\hspace{-0.2cm} \left|\mathbf{a}_r^H (\overline{\theta}_{IU}) \boldsymbol{\phi}^\star \right| = \left| \sum\limits_{m=1}^M e^{j\pi \delta \left(-\frac{M-1}{2}+m-1\right)} \right|  = \frac{\sin (\frac{\pi M \delta }{2})}{\sin (\frac{\pi  \delta }{2})}.
	\end{equation}
	The function $\frac{\sin (\frac{\pi M x }{2})}{\sin (\frac{\pi  x }{2})}$ exhibits behavior similar to that of the $sinc$ function and has a zero value at $\frac{2}{M}$. This function decreases monotonically over $[0,\frac{2}{M}]$.   Thus, when the user's angle $\overline{\theta}_{IU}$ is exactly aligned with the angle of the beams, the IRS beamforming gain reaches its maximum, i.e., $\delta=0$ and $\left|\mathbf{a}_r^H (\overline{\theta}_{IU}) \boldsymbol{\phi}^\star \right|=M$. When the user's angle $\overline{\theta}_{IU}$ lies in the middle of two adjacent beams, the IRS beamforming gain is the lowest, i.e., $\delta=\frac{1}{L}$ and  $\left|\mathbf{a}_r^T (\overline{\theta}_{IU}) \boldsymbol{\phi}^\star \right|={\sin \left(\frac{\pi M  }{2 L}\right)} {\sin^{-1} (\frac{\pi  }{2 L})}$.
	
	After beam training, the user finds the IRS's best beam $\boldsymbol{\phi }^\star$ and feeds its index back to the IRS controller. The IRS REs then adopt  the beam to reflect the signals from the BS to the communication user  during the data transmission phase (i.e., Phase II). The achievable rate of the user is thus given by
	\begin{equation} \label{exp_sweeping}
		R  = \frac{T-\tau}{T} \log \left(1+ \frac{N P_t |\alpha_g|^2 |\alpha_h|^2}{\sigma^2} \frac{\sin^2 (\frac{\pi M \delta  }{2  })}{\sin^2 (\frac{\pi \delta  }{2  })} \right),
	\end{equation}
	 Assuming that the duration of one beam is equal to $K$-symbol duration, we have $\tau=KL$.

	\subsection{CRB for Angle Estimation}
	In this subsection, the target angle is first estimated via the maximum likelihood estimation (MLE). Then, the CRB of the angle estimation is derived. 
	The received echo signals at the IRS SEs in \eqref{exp_sensors} can be represented as
		\begin{align} \label{exp_sensor_phaseI}
		\hspace{-0.2cm}	\mathbf{y}_s(t) &= \sqrt{P_t} \mathbf{H}_t \operatorname{diag} (\boldsymbol{\phi}(t)) \mathbf{G} \mathbf{w} s(t)   + \mathbf{n}_s(t) \notag \\
			& = \sqrt{N P_t} \alpha_g \alpha_s \mathbf{a}_s(\theta_{IT}) \mathbf{a}_r^H (\theta_{IT}) \operatorname{diag} (\boldsymbol{\phi}(t)) \mathbf{a}_r(\theta_{BI})  + \mathbf{n}_s(t) \notag \\
			& = \sqrt{N P_t} \alpha_g \alpha_s \mathbf{a}_s(\theta_{IT}) \boldsymbol{\phi}^T(t) \operatorname{diag} (\mathbf{a}_r^H (\theta_{IT}) ) \mathbf{a}_r (\theta_{BI}) + \mathbf{n}_s(t) \notag \\
			& = \sqrt{N P_t} \alpha_g \alpha_s \mathbf{a}_s(\theta_{IT})  \left(\mathbf{a}_r^T (\overline{\theta}_{IT}) \boldsymbol{\phi}(t) \right) +\mathbf{n}_s(t), 
		\end{align}
	where $\overline{\theta}_{IT}=\arcsin \left(\sin (\theta_{BI}) - \sin (\theta_{IT})\right)$.
	By collecting all received signals in the period of beam scanning, we have
		\begin{align} \label{exp_collect_sweeping}
			\mathbf{Y} &= [\mathbf{y}_s(1), \mathbf{y}_s(2),\cdots, \mathbf{y}_s(L) ] \notag \\
			& = \sqrt{N P_t} \alpha_g \alpha_s \mathbf{a}_s(\theta_{IT})   \mathbf{a}_r^T (\overline{\theta}_{IT}) \left[ \boldsymbol{\phi}(1), \boldsymbol{\phi}(2), \cdots, \boldsymbol{\phi}(L) \right]   +\mathbf{N} \notag \\
			& \triangleq \alpha_s \mathbf{a}_s (\theta_{IT}) \mathbf{q}^T(\theta_{IT}) \mathbf{X} + \mathbf{N},
		\end{align}
	where  $\mathbf{X} \triangleq \sqrt{N P_t} \alpha_g \left[ \boldsymbol{\phi}(1), \boldsymbol{\phi}(2), \cdots, \boldsymbol{\phi}(L) \right]$, $\mathbf{q}(\theta_{IT}) \triangleq \mathbf{a}_r (\overline{\theta}_{IT})$, and $\mathbf{N} \triangleq [\mathbf{n}_s(1), \mathbf{n}_s(2), \cdots, \mathbf{n}_s(L ) ]$. Let $\mathbf{R}_x  \triangleq \frac{1}{L} \mathbf{X} \mathbf{X}^H$ represent the covariance matrix of $\mathbf{X}$. As the codebook is designed as \eqref{exp_codebook}, we have $\mathbf{R}_x = N P_t |\alpha_g|^2 \mathbf{I}_M$. 
	For ease of notation, we simply re-denote $\theta_{IT}$ by $\theta$.   Let $\boldsymbol{\xi} =[\theta, \text{Re}\{\alpha_s\}, \text{Im}\{\alpha_s\}]\in \mathbb{R}^{3\times 1}$ denote the vector of the unknown parameters to be estimated, which includes the target's angle and the complex channel coefficients. By vectorizing \eqref{exp_collect_sweeping}, we have
	\begin{equation} 
		\operatorname{vec}(\mathbf{Y}) = \alpha_s \operatorname{vec}(\mathbf{u}(\theta))+ \operatorname{vec}(\mathbf{N}),
	\end{equation}
	where $\mathbf{u}(\theta) =   \mathbf{a}_s(\theta) \mathbf{q}^T(\theta)  \mathbf{X} $. Then, the target angle can be estimated according to the following theorem. 
	\begin{theorem}
		The angle estimated via the MLE is given by
		\begin{equation} \label{est_mle}
			\vspace{-0.2cm}
			\theta_\text{MLE} = \arg \max \limits_{\theta} \quad \left|\mathbf{a}_s^H (\theta) \mathbf{Y} \mathbf{X}^H \mathbf{q}^*(\theta) \right|^2,
		\end{equation}	
		which can be solved by exhaustive search over $\left[-\frac{\pi}{2}, \frac{\pi}{2} \right]$.
	\end{theorem}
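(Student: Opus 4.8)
The plan is to obtain \eqref{est_mle} as the concentrated maximum-likelihood estimator of the angle. Since $\operatorname{vec}(\mathbf{N})\sim\mathcal{CN}(\mathbf{0},\sigma^2\mathbf{I})$, the negative log-likelihood of $\operatorname{vec}(\mathbf{Y})$, up to additive constants, equals $\frac{1}{\sigma^2}\|\operatorname{vec}(\mathbf{Y})-\alpha_s\operatorname{vec}(\mathbf{u}(\theta))\|^2=\frac{1}{\sigma^2}\|\mathbf{Y}-\alpha_s\mathbf{u}(\theta)\|_F^2$, so the MLE of $\boldsymbol{\xi}$ is the joint minimizer of $f(\theta,\alpha_s)\triangleq\|\mathbf{Y}-\alpha_s\mathbf{u}(\theta)\|_F^2$ over the unknown angle $\theta$ and the complex gain $\alpha_s$. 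First I would concentrate out $\alpha_s$: for fixed $\theta$ this is an ordinary linear least-squares problem, with minimizer $\hat\alpha_s(\theta)=\operatorname{tr}\!\big(\mathbf{u}^H(\theta)\mathbf{Y}\big)/\|\mathbf{u}(\theta)\|_F^2$. Substituting $\hat\alpha_s(\theta)$ back gives $\min_{\alpha_s}f(\theta,\alpha_s)=\|\mathbf{Y}\|_F^2-|\operatorname{tr}(\mathbf{u}^H(\theta)\mathbf{Y})|^2/\|\mathbf{u}(\theta)\|_F^2$, so the MLE of $\theta$ maximizes $g(\theta)\triangleq|\operatorname{tr}(\mathbf{u}^H(\theta)\mathbf{Y})|^2/\|\mathbf{u}(\theta)\|_F^2$.

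It then remains to reduce $g(\theta)$ to the stated objective. Substituting $\mathbf{u}(\theta)=\mathbf{a}_s(\theta)\mathbf{q}^T(\theta)\mathbf{X}$ and using the cyclic property of the trace, the numerator becomes $\operatorname{tr}\!\big(\mathbf{X}^H\mathbf{q}^*(\theta)\mathbf{a}_s^H(\theta)\mathbf{Y}\big)=\mathbf{a}_s^H(\theta)\mathbf{Y}\mathbf{X}^H\mathbf{q}^*(\theta)$, i.e.\ exactly the scalar inside the modulus in \eqref{est_mle}. For the denominator, $\|\mathbf{u}(\theta)\|_F^2=\operatorname{tr}\!\big(\mathbf{X}^H\mathbf{q}^*(\theta)\mathbf{a}_s^H(\theta)\mathbf{a}_s(\theta)\mathbf{q}^T(\theta)\mathbf{X}\big)=\|\mathbf{a}_s(\theta)\|^2\,\mathbf{q}^T(\theta)\mathbf{X}\mathbf{X}^H\mathbf{q}^*(\theta)$. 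Here I would invoke the codebook property already noted before the theorem: because the DFT codebook \eqref{exp_codebook} yields $\mathbf{R}_x=\frac1L\mathbf{X}\mathbf{X}^H=NP_t|\alpha_g|^2\mathbf{I}_M$, one has $\mathbf{q}^T(\theta)\mathbf{X}\mathbf{X}^H\mathbf{q}^*(\theta)=LNP_t|\alpha_g|^2\|\mathbf{q}(\theta)\|^2$; and since $\mathbf{a}_s(\theta)$ and $\mathbf{q}(\theta)=\mathbf{a}_r(\overline\theta_{IT})$ are array-response vectors with unit-modulus entries, $\|\mathbf{a}_s(\theta)\|^2=M_s$ and $\|\mathbf{q}(\theta)\|^2=M$ are constants. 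Hence $\|\mathbf{u}(\theta)\|_F^2=M_sM\,LNP_t|\alpha_g|^2$ is independent of $\theta$, so maximizing $g(\theta)$ is equivalent to maximizing $|\mathbf{a}_s^H(\theta)\mathbf{Y}\mathbf{X}^H\mathbf{q}^*(\theta)|^2$, which is \eqref{est_mle}. Finally, $\theta$ is a single scalar confined to $[-\frac{\pi}{2},\frac{\pi}{2}]$ and the resulting objective is a highly oscillatory, non-concave function of $\theta$, so a one-dimensional exhaustive search over that interval is the natural way to compute $\theta_{\text{MLE}}$.

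The one point that needs care is the denominator: the collapse of $g(\theta)$ to \eqref{est_mle} rests on $\|\mathbf{u}(\theta)\|_F^2$ being constant in $\theta$, which in turn hinges on the orthogonality property $\mathbf{R}_x\propto\mathbf{I}_M$ of the beam-scanning codebook together with the constant-modulus structure of the array-response vectors; had the scanning beams not tiled the angular domain uniformly, a $\theta$-dependent normalization would have to be retained inside the $\arg\max$. Everything else is routine manipulation with the vec/trace identities and the complex-Gaussian likelihood.
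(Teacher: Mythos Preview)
Your proposal is correct and follows essentially the same route as the paper's proof: write the Gaussian likelihood, concentrate out $\alpha_s$ by linear least squares, and observe that the resulting denominator $\|\mathbf{u}(\theta)\|_F^2=M_sM\,LNP_t|\alpha_g|^2$ is independent of $\theta$ so the objective reduces to \eqref{est_mle}. If anything, you are slightly more explicit than the paper about why the denominator is constant (invoking $\mathbf{R}_x\propto\mathbf{I}_M$ and the unit-modulus array-response entries), whereas the paper simply states the simplified expression.
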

	\begin{proof}
		See Appendix \ref{append_mle}.
	\end{proof} 
	
	In the following, we evaluate the performance of the angle estimation by deriving its CRB. Let $\mathbf{F}\in \mathbb{R}^{3\time 3}$ denote the Fisher information matrix (FIM) for estimating $\boldsymbol{\xi}$. Since $\mathbf{N} \in \mathcal{CN}(0, \mathbf{R}_z)$ with $\mathbf{R}_z= \sigma^2 \mathbf{I}_{M_s}$, each entry of $\mathbf{F}$ is given by \cite{kay1993fundamentals}   
	\begin{equation} 
		\vspace{-0.2cm}
		\mathbf{F}_{i,j} = 2 \operatorname{Re}\left\{\frac{\partial (\alpha_s \mathbf{u}(\theta))^H}{\partial \xi_i} \mathbf{R}_z^{-1} \frac{\partial (\alpha_s \mathbf{u}(\theta))}{\partial \xi_j}\right\}, i,j \in\{1,2,3\}	
	\end{equation}
	Then, the FIM $\mathbf{F}$ can be partitioned as
	\begin{equation}
		\vspace{-0.2cm}
		\tilde{\mathbf{F}}=\left[\begin{array}{cc}
			{\mathbf{F}}_{\theta \theta} & {\mathbf{F}}_{\theta \overline{\boldsymbol{\alpha}}} \\
			{\mathbf{F}}_{\theta \overline{\boldsymbol{\alpha}}}^{\mathrm{T}} & \tilde{\mathbf{F}}_{\overline{\boldsymbol{\alpha}} \overline{\boldsymbol{\alpha}}}
		\end{array}\right],
	\end{equation}
	where $\overline{\boldsymbol{\alpha}}= [\operatorname{Re}\{\alpha_s\},\operatorname{Im}\{\alpha_s\}]^T$. The CRB for estimating the angle $\theta$ is defined as
	\begin{equation}
		\vspace{-0.2cm}
		\text{CRB}(\theta) = [\mathbf{F}^{-1}]_{1,1}=[\mathbf{F}_{\theta\theta} - \mathbf{F}_{\theta \overline{\boldsymbol{\alpha}}} \mathbf{F}_{\overline{\boldsymbol{\alpha}} \overline{\boldsymbol{\alpha}}}^{-1} \mathbf{F}_{\theta \overline{\boldsymbol{\alpha}}}^T]^{-1}.
	\end{equation}
	Then, the CRB for target sensing is given by the following theorem.
	\begin{theorem}
		The CRB for estimating the angle $\theta$ is given by
		\begin{equation} \label{exp_crb_ori}
			\hspace{-0.15cm}	\mathrm{CRB}(\theta)
				=\frac{\sigma^2}{2  |\alpha_s|^2\left(\operatorname{tr}\left(\dot{\mathbf{u}}(\theta)  \dot{\mathbf{u}}^{\mathrm{H}}(\theta)\right) - \frac{\left|\operatorname{tr}\left(\mathbf{u}(\theta) \dot{\mathbf{u}}^{\mathrm{H}}(\theta)\right)\right|^2} {\operatorname{tr}\left(\mathbf{u}(\theta) \mathbf{u}^{\mathrm{H}}(\theta)\right)}\right)}.		
		\end{equation}
	\end{theorem}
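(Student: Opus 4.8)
The plan is to evaluate the $3\times3$ Fisher information matrix directly from the vectorized observation model $\operatorname{vec}(\mathbf{Y})=\alpha_s\operatorname{vec}(\mathbf{u}(\theta))+\operatorname{vec}(\mathbf{N})$, noting that $\operatorname{vec}(\mathbf{N})\sim\mathcal{CN}(\mathbf{0},\sigma^2\mathbf{I})$ has a covariance that does not depend on $\boldsymbol{\xi}$, so only the mean $\mathbf{m}(\boldsymbol{\xi})\triangleq\alpha_s\operatorname{vec}(\mathbf{u}(\theta))$ carries information and the entry formula collapses to $\mathbf{F}_{i,j}=\tfrac{2}{\sigma^2}\operatorname{Re}\{(\partial\mathbf{m}/\partial\xi_i)^H(\partial\mathbf{m}/\partial\xi_j)\}$. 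First I would record the three partial derivatives of the mean: $\partial\mathbf{m}/\partial\theta=\alpha_s\operatorname{vec}(\dot{\mathbf{u}}(\theta))$, with $\dot{\mathbf{u}}(\theta)$ the entrywise derivative of $\mathbf{u}(\theta)=\mathbf{a}_s(\theta)\mathbf{q}^T(\theta)\mathbf{X}$ (which, via the chain rule, involves the derivatives of $\mathbf{a}_s(\theta)$ and of $\mathbf{q}(\theta)=\mathbf{a}_r(\overline{\theta}_{IT})$, but need not be expanded since it stays symbolic in the final bound); $\partial\mathbf{m}/\partial\operatorname{Re}\{\alpha_s\}=\operatorname{vec}(\mathbf{u}(\theta))$; and $\partial\mathbf{m}/\partial\operatorname{Im}\{\alpha_s\}=j\operatorname{vec}(\mathbf{u}(\theta))$.

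Next I would assemble the blocks of the FIM, repeatedly using the identities $\|\operatorname{vec}(\mathbf{A})\|^2=\operatorname{tr}(\mathbf{A}\mathbf{A}^H)$ and $\operatorname{vec}(\mathbf{A})^H\operatorname{vec}(\mathbf{B})=\operatorname{tr}(\mathbf{A}^H\mathbf{B})$. This gives $\mathbf{F}_{\theta\theta}=\tfrac{2|\alpha_s|^2}{\sigma^2}\operatorname{tr}(\dot{\mathbf{u}}\dot{\mathbf{u}}^H)$; the coupling vector $\mathbf{F}_{\theta\overline{\boldsymbol{\alpha}}}=\tfrac{2}{\sigma^2}\big[\operatorname{Re}\{\alpha_s^*\operatorname{tr}(\dot{\mathbf{u}}^H\mathbf{u})\},\,-\operatorname{Im}\{\alpha_s^*\operatorname{tr}(\dot{\mathbf{u}}^H\mathbf{u})\}\big]$; and — the key simplification — $\mathbf{F}_{\overline{\boldsymbol{\alpha}}\overline{\boldsymbol{\alpha}}}=\tfrac{2}{\sigma^2}\operatorname{tr}(\mathbf{u}\mathbf{u}^H)\,\mathbf{I}_2$, since its off-diagonal entry is $\tfrac{2}{\sigma^2}\operatorname{Re}\{j\operatorname{tr}(\mathbf{u}^H\mathbf{u})\}=0$ (the quantity $\operatorname{tr}(\mathbf{u}^H\mathbf{u})$ being real). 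The scaled-identity form of $\mathbf{F}_{\overline{\boldsymbol{\alpha}}\overline{\boldsymbol{\alpha}}}$ is what makes the ensuing Schur complement clean.

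Finally I would substitute into $\mathrm{CRB}(\theta)=[\mathbf{F}_{\theta\theta}-\mathbf{F}_{\theta\overline{\boldsymbol{\alpha}}}\mathbf{F}_{\overline{\boldsymbol{\alpha}}\overline{\boldsymbol{\alpha}}}^{-1}\mathbf{F}_{\theta\overline{\boldsymbol{\alpha}}}^T]^{-1}$. Because $\mathbf{F}_{\overline{\boldsymbol{\alpha}}\overline{\boldsymbol{\alpha}}}^{-1}$ is a scalar multiple of $\mathbf{I}_2$, the quadratic form equals $\tfrac{\sigma^2}{2\operatorname{tr}(\mathbf{u}\mathbf{u}^H)}\|\mathbf{F}_{\theta\overline{\boldsymbol{\alpha}}}\|^2$, and collapsing the two components with $(\operatorname{Re}z)^2+(\operatorname{Im}z)^2=|z|^2$ (here $z=\alpha_s^*\operatorname{tr}(\dot{\mathbf{u}}^H\mathbf{u})$) gives $\|\mathbf{F}_{\theta\overline{\boldsymbol{\alpha}}}\|^2=\tfrac{4|\alpha_s|^2}{\sigma^4}|\operatorname{tr}(\dot{\mathbf{u}}^H\mathbf{u})|^2=\tfrac{4|\alpha_s|^2}{\sigma^4}|\operatorname{tr}(\mathbf{u}\dot{\mathbf{u}}^H)|^2$ by cyclicity of the trace. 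Putting the pieces together produces exactly \eqref{exp_crb_ori}. I expect the only real obstacle to be bookkeeping rather than analysis: keeping the $\operatorname{vec}$/trace identities straight, remembering that the effective covariance of $\operatorname{vec}(\mathbf{N})$ is $\sigma^2\mathbf{I}_{M_sL}$ (not $\sigma^2\mathbf{I}_{M_s}$), and matching $\operatorname{tr}(\dot{\mathbf{u}}^H\mathbf{u})$ with its conjugate so that the modulus in the statement is reproduced correctly.
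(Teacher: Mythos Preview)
Your proposal is correct and follows essentially the same route as the paper: compute the three FIM blocks from the partial derivatives of the mean $\alpha_s\operatorname{vec}(\mathbf{u}(\theta))$, observe that $\mathbf{F}_{\overline{\boldsymbol{\alpha}}\overline{\boldsymbol{\alpha}}}$ is a scalar multiple of $\mathbf{I}_2$, and invert the Schur complement. In fact you supply more detail than the paper does---its appendix simply records the three blocks (writing the $\overline{\boldsymbol{\alpha}}$-derivative compactly as $[1,j]\otimes\operatorname{vec}(\mathbf{u}(\theta))$) and asserts that \eqref{exp_crb_ori} follows, whereas you explicitly carry out the $(\operatorname{Re}z)^2+(\operatorname{Im}z)^2=|z|^2$ collapse in the Schur complement.
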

	\begin{proof}
		See Appendix \ref{append_crb}.
	\end{proof}
	
	With the array response vector  defined as in  \eqref{exp_array_resp}, we obtain
	\begin{align} 
		&\mathbf{a}_s^H (\theta) \dot{\mathbf{a}}_s(\theta) = 0, \dot{\mathbf{a}}_s^H(\theta) \mathbf{a}_s(\theta) = 0, \\
		&\mathbf{q}^H (\theta) \dot{\mathbf{q}}(\theta) = 0, \dot{\mathbf{q}}^H(\theta) \mathbf{q}(\theta) = 0, \forall \theta.
	\end{align}
	Consequently, the CRB for estimating the angle $\theta$ with the STAS protocol can be simplified as
	\begin{equation} \label{exp_crb_phaseI}
		\mathrm{CRB}(\theta)=\frac{\sigma^2}{2 L N P_t |\alpha_s|^2 |\alpha_g|^2 \left( M \|\dot{\mathbf{a}}_s(\theta)\|^2 + M_s \|\dot{\mathbf{q}}(\theta)\|^2  \right) },
	\end{equation}
	where $\dot{\mathbf{a}}_s(\theta) \triangleq \frac{\partial  \mathbf{a}_s(\theta) }{\partial \theta} = j \pi \cos(\theta) \boldsymbol{\zeta} \mathbf{a}_s(\theta)$, $\boldsymbol{\zeta}=\operatorname{diag}\left(-\frac{M-1}{2}, -\frac{M-3}{2}, \cdots, \frac{M-1}{2}\right)$, and $\dot{\mathbf{q}}(\theta) \triangleq \frac{\partial  \mathbf{q}(\theta) }{\partial \theta} = j \pi \cos(\theta) \boldsymbol{\zeta} \mathbf{q}(\theta)$. For a given set of SNR, the number of BS antennas, BS-IRS path loss, and IRS-target-IRS path loss, we can improve the sensing accuracy by increasing the codebook size, the number of IRS REs, and the number of IRS SEs, according to \eqref{exp_crb_phaseI}. 
	
	\section{Simulation Results}

	\begin{figure*}[t] 
		\begin{minipage}[t]{0.33\linewidth} 
			\centering
			\includegraphics[width=2.4in, height=1.9in]{./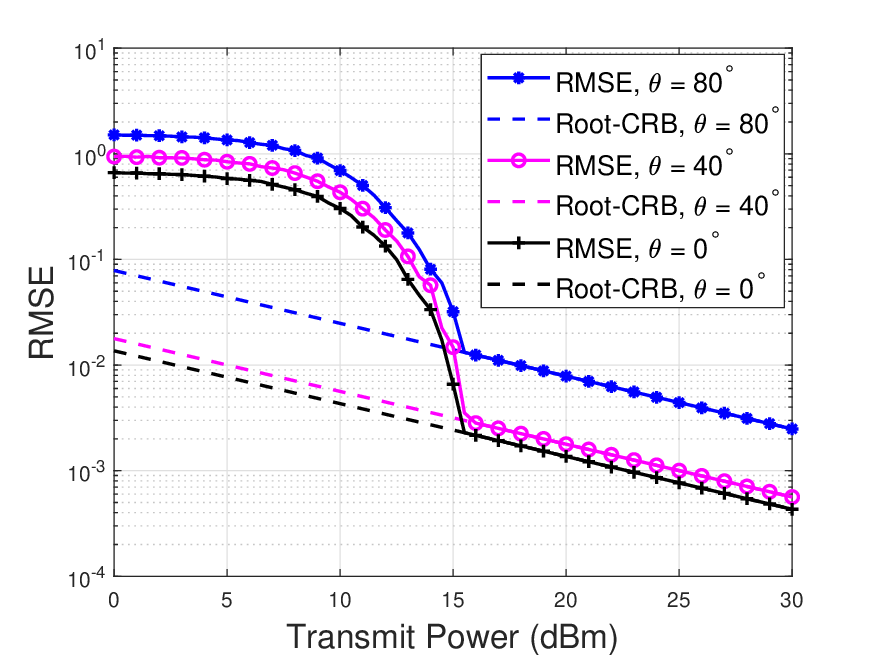} 
			\caption{RMSE and RCRB versus transmit power.} 
			\label{fig_crb} 
		\end{minipage}%
		\hspace{0.2cm}
		\begin{minipage}[t]{0.33\linewidth}
			\centering
			\includegraphics[width=2.4in, height=1.9in]{./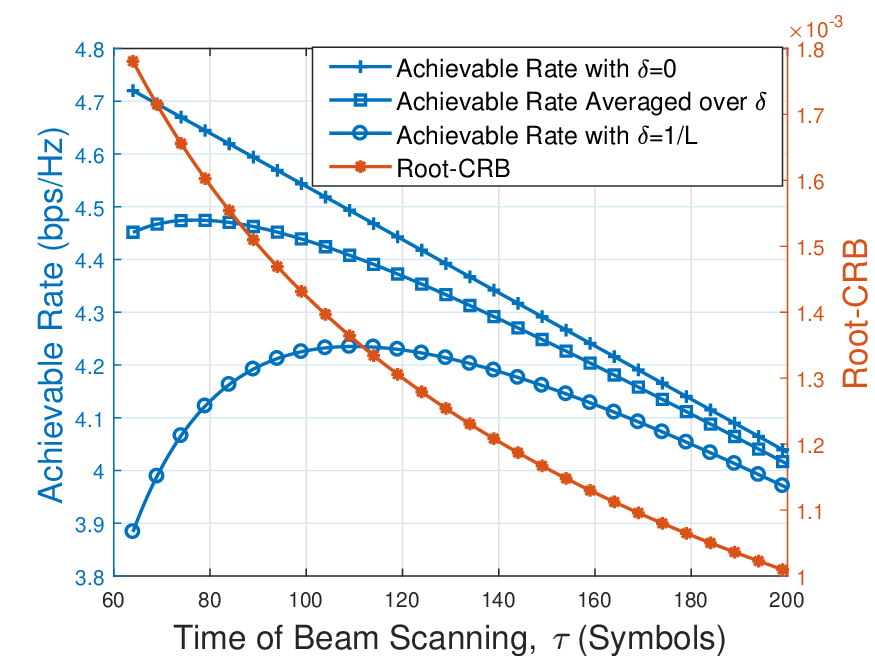}
			\caption{ Achievable rate and RCRB versus the time of IRS beam scanning.} \label{fig_online_m}
		\end{minipage}%
		\hspace{0.1cm}
		\begin{minipage}[t]{0.33\linewidth}
			\centering
			\includegraphics[width=2.4in, height=1.9in]{./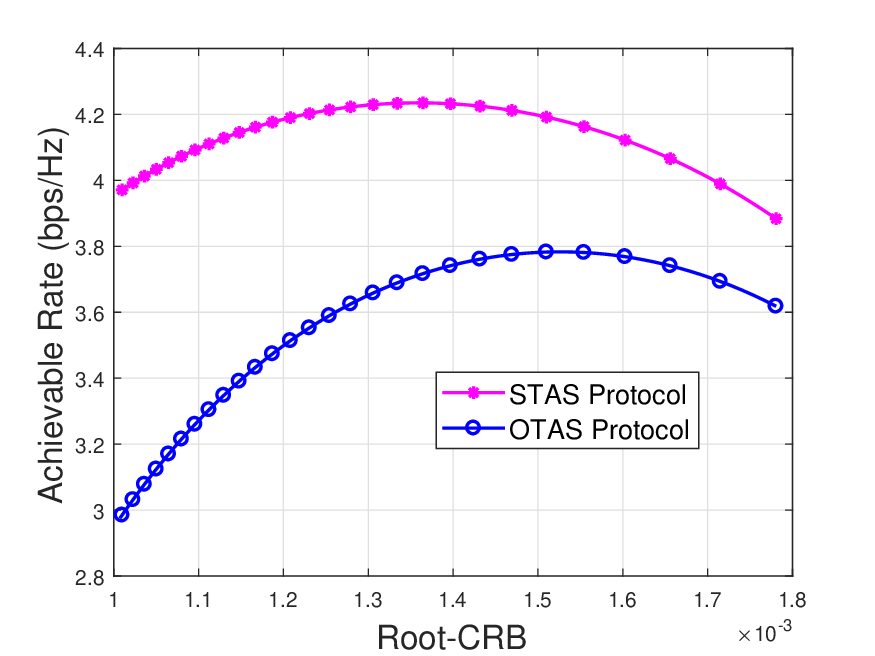}
\caption{ Achievable rate of the STAS and OTAS protocols versus RCRB.}\label{fig_online_w}
		\end{minipage}		
		\vspace{-0.3cm}
	\end{figure*}
	In this section, numerical examples are provided to validate the performance of the beam scanning-based ISAC system. The carrier frequency is $f_c=28$ GHz. Other system parameters are set as follows unless specified otherwise later: $N=64$, $M=64$, $M_s=8$, $L=64$, $K=1$, $P_t=20$ dBm, $T=1000$ symbols, $d_{BI}=30$ m, $d_{IU}=10$ m, $d_{IT}=5$ m, $\tau=64$ symbols, $\theta_{BI} = -30^\circ$, $\theta_{IT} = 40^\circ$, $\theta_{IU} = 0^\circ$, $\sigma^2 = -120$ dBm, and $\kappa = 7$ dBsm. The curves of root mean squared error (RMSE) are obtained by averaging over 1000 independent realizations of the noise.
	
	Fig. \ref{fig_crb} illustrates  the tightness of the derived root CRB (RCRB) of angle estimation by comparing it with RMSE. The RMSE is defined as $\text{RMSE} \triangleq \sqrt{\mathbb{E}[(\theta-\widetilde{\theta})^2]}$, where $\widetilde{\theta}$ denotes the estimation of $\theta$. It is observed that the derived RCRB is identical to the RMSE when the BS transmit power is larger than 15 dBm, which validates the accuracy of the  RCRB derivation. However, the RCRB is loose in the low SNR regime, owing to its inherent limitation of being determined by the local curvature of the log-likelihood function but not considering the global information \cite{9705498}. In addition, as $\theta$ increases, the norm $\|\dot{\mathbf{a}}_s(\theta)\|^2$ and $\|\dot{\mathbf{q}}(\theta)\|^2$ in  \eqref{exp_crb_phaseI} decreases, leading to the increase of RCRB.
	
	Next, we evaluate the performance of our proposed beam scanning-based ISAC system.  {Fig. \ref{fig_online_m} shows the achievable rate  and RCRB	versus the time of beam scanning. The curve of ``Achievable Rate Averaged over $\delta$" is generated by taking the expectation of \eqref{exp_sweeping} over $\delta$, where $\delta \sim \mathcal{U}(0,\frac{1}{L})$.  The achievable rate and RCRB exhibit different variations with the IRS beam scanning time. As the time of beam scanning increases, the IRS beamforming gain in the case of $\delta=\frac{1}{L}$ increases at the expense of reduced data transmission time, leading to an initial increase and subsequent decrease in the achievable rate. In the case of $\delta=0$, the IRS beamforming gain already reaches its maximum when $L=M$. Thus, increasing the beam scanning time further only leads to a decrease in the achievable rate. However, when considering the average aspect, the achievable rate initially rises and subsequently declines. On the other hand, the RCRB monotonically decreases  as the time of beam scanning increases. Therefore, a proper beam scanning time that balances the achievable rate and RCRB is desired.}
	
	Finally, we compare our proposed protocol with a benchmark protocol, namely the orthogonal (beam) training and sensing (OTAS) protocol, which utilizes two separate beam scanning periods for communication and sensing, respectively. Specifically, the IRS beam scanning  is only used for communication user in Phase I, followed by data transmission in Phase II. In Phase III, IRS beam scanning is conducted again for sensing only.  In the OTAS protocol, the achievable rate of the user is given by
	\begin{equation} \label{exp_ssac}
		\vspace{-0.2cm}
		R_\mathrm{OTAS} = \frac{T-\tau-\tau_s}{T} \log \left(1+ \frac{N P_t |\alpha_g|^2 |\alpha_h|^2}{\sigma^2} \frac{\sin^2 (\frac{\pi M \delta  }{2  })}{\sin^2 (\frac{\pi \delta  }{2  })} \right),
		\vspace{0.2cm}
	\end{equation}	
	where $\tau_s$ is the time of IRS beam scanning for sensing.	For ease of comparison, we set $\tau_s = \tau$.
	Fig. \ref{fig_online_w} depicts the achievable rates of STAS and OTAS protocols against the RCRB. It is observed that the achievable rate of the STAS protocol is consistently  higher than that of the OTAS protocol given the same RCRB for sensing. This is because the STAS protocol utilizes the downlink IRS beam scanning for simultaneous  communication and sensing, which saves the time for dedicated sensing compared with the OTAS protocol.

	\section{Conclusion}
	In this paper, we propose a new STAS protocol that utilizes downlink IRS beam scanning for simultaneous training and sensing to achieve efficient IRS-aided mmWave ISAC.  We derive the achievable rate of the communication user and the CRB of the target angle estimation given the IRS beam scanning codebook. The achievable rate and CRB are shown to behave differently with the time of IRS beam scanning, which, thus, needs to be properly chosen in practice to  strike a balance between communication and sensing performance. Moreover, it is shown that the STAS protocol, which utilizes IRS beam scanning for simultaneous  communication and sensing,  outperforms the benchmark OTAS protocol.
	\appendix
	\begin{appendices}
		\subsection{Proof of Theorem 1} \label{append_mle}
		Denoting $\operatorname{vec}(\mathbf{Y})$ as $\widetilde{\mathbf{y}}$, the likelihood function of $\operatorname{vec}(\mathbf{Y})$ given $\boldsymbol{\xi}$ is
		\begin{equation} 
		\hspace{-0.28cm}	L(\widetilde{\mathbf{y}}; \boldsymbol{\xi}) = \frac{1}{\left(\pi \sigma^2\right)^{L M_s}} \exp \left(-\frac{1}{\sigma^2} \|\widetilde{\mathbf{y}} - \alpha_s \operatorname{vec}(\mathbf{u}(\theta)) \|^2\right).
		\end{equation}		
		Then, maximizing the likelihood function is equivalent to minimizing $\|\widetilde{\mathbf{y}} - \alpha_s \operatorname{vec}(\mathbf{u}(\theta)) \|^2$. Therefore, the MLE can be written as
		\begin{equation} \label{exp_mle}
			\begin{aligned}
				(\theta_{\text{MLE}}, \alpha_{\text{MLE}}) = \arg \min \limits_{\theta, \alpha} \|\widetilde{\mathbf{y}} - \alpha_s \operatorname{vec}(\mathbf{u}(\theta)) \|^2.
			\end{aligned}
		\end{equation}
		With any given $\theta$, the optimal $\alpha$ is given by
		\begin{equation} \label{exp_mle_alpha}
			\alpha_{\text{MLE}} = \frac{(\operatorname{vec}(\mathbf{u}(\theta)))^H \widetilde{\mathbf{y}}}{\|\operatorname{vec}(\mathbf{u}(\theta))\|^2}.
		\end{equation}		
		Then, by substituting \eqref{exp_mle_alpha} back into  \eqref{exp_mle}, we have
		\begin{equation} \label{exp_mle_theta}
				\|\widetilde{\mathbf{y}} - \alpha_{\text{MLE}} \operatorname{vec}(\mathbf{u}(\theta)) \|^2 
				 =  \|\widetilde{\mathbf{y}}\|^2 - \frac{\left|\mathbf{a}_s^H (\theta) \mathbf{Y} \mathbf{X}^H \mathbf{q}^*(\theta) \right|^2} {L N P_t M M_s |\alpha_g|^2}.
		\end{equation}
		Thereby, the MLE of $\theta$ is given by \eqref{est_mle}.
		
		\subsection{Proof of Theorem 2} \label{append_crb}
		Since $\frac{\partial (\alpha_s\mathbf{u}(\theta) )}{\partial \theta}  = \alpha_s \operatorname{vec} (\dot{\mathbf{u}}(\theta) )$ and $\frac{\partial (\alpha_s\mathbf{u}(\theta) )}{\partial \overline{\boldsymbol{\alpha}}} = [1,j]\otimes \operatorname{vec} ({\mathbf{u}(\theta)})$, we have
		\begin{align} 
			\vspace{-0cm}
			& \mathbf{F}_{\theta\theta} =  \frac{2 |\alpha_s|^2}{\sigma^2} \operatorname{tr}  \left(\dot{\mathbf{u}}(\theta)  \dot{\mathbf{u}}^H(\theta) \right), \\
			& \mathbf{F}_{\theta \overline{\boldsymbol{\alpha}}}	
			=  \frac{2}{\sigma^2} \operatorname{Re}\left\{ \alpha_s^*  \operatorname{tr} \left(\mathbf{u}(\theta) \dot{\mathbf{u}}^H(\theta) \right)    [1,j] \right\}, \\
			&\mathbf{F}_{\overline{\boldsymbol{\alpha}} \overline{\boldsymbol{\alpha}}} 
			= \frac{2}{\sigma^2} \operatorname{tr}\left(\mathbf{u}(\theta)   \mathbf{u}^H(\theta)\right) \mathbf{I}_2. 
		\end{align}
		Thus, the FIM can be obtained as in  \eqref{exp_crb_ori}.	
	\end{appendices}
	
	\bibliographystyle{IEEEtran}
	\bibliography{review}
	
\end{document}